\newtheorem{lem}{Lemma}
\newtheorem{defi}{Definition}
\newtheorem*{col}{Corollary}
\newtheorem{rem}{Remark}
\newcommand{\ket}[1]{\mathop{\left|#1\right>}\nolimits}            
\newcommand{\ketf}[1]{\mathop{\lfloor#1\rangle}\nolimits}       
\newcommand{\bra}[1]{\mathop{\left<#1\,\right|}\nolimits}         
\newcommand{\brk}[2]{\langle #1 | #2 \rangle}
\newcommand{\bkf}[2]{\langle #1 \rfloor\hspace{-4pt}\lfloor\, #2 \rangle}          
\newcommand{\kbf}[2]{\lfloor #1\rangle \hspace{-4pt}\,\langle #2 \rfloor}          
\newcommand{\kbr}[2]{| #1\rangle\!\langle #2 |}
\newcommand{\atanh}[1]{\mathop{{\mathrm{atanh{\mit{\ #1}}}}}}  
\newcommand{\nn}{\nonumber}
\def\p{\prime}
\newcommand{\dg}{\dagger}
\def\openone{\mathbb{1}}
\def\C{\mathbb{C}}
\def\R{\mathbb{R}}
\def\CC{\mathcal{C}}
\def\H{\mathcal{H}}
\def\F{\mathcal{F}}
\def\a{\alpha}
\def\b{\beta}
\def\d{\delta}
\def\e{\epsilon}
\def\ve{\varepsilon}
\def\vr{\varrho}
\def\vp{\varphi}
\def\s{\sigma}
\def\vs{\varsigma}
\def\la{\lambda}
\def\La{\Lambda}
\def\vt{\vartheta}
\def\t{\theta}
\def\ketp{\ket{{p},\s}}
\def\ketq{\ket{{q},\s}}
\def\crop{a^\dg(p,\s)}
\def\cropp{a(p',\s')}
\begin{document}

\title{Relativistically invariant photonic wave packets}

\author{Kamil Br\'adler\thanks{kbradler@cs.mcgill.ca}\\\\School of Computer Science\\McGill University\\Montreal\\Quebec, Canada}

\maketitle

\begin{abstract}
We present a photonic wave packet construction  which is immune against the decoherence effects induced by the action of the Lorentz group. The amplitudes of a pure quantum state representing the wave packet remain invariant irrespective of the reference frame into which the wave packet has been transformed. Transmitted information is encoded in the helicity degrees of  freedom of two correlated momentum modes. The helicity encoding is considered to be particularly suitable for free-space communication.  The integral part of the story is information retrieval on the receiver's side. We employed probably the simplest possible helicity (polarization) projection measurement originally studied by Peres and Terno. Remarkably, the same conditions  ensuring the invariance of the wave packet also guarantee perfect distinguishability in the process of measuring the helicity.
\end{abstract}

\section{Introduction}

The helicity density matrix of a generic momentum-helicity wave packet under the action of the Lorentz group is ill-defined. The underlying reason is the explicit helicity dependence on the momentum eigenstates when transformed from one reference frame into another. Consequently, tracing over momentum is an act of violence against the rules of quantum mechanics because one intends to sum helicity density matrices from different Hilbert spaces. It makes no sense to talk about transformation properties of the helicity density matrices and there is no group representation the helicity matrix can transform according to. The situation is further worsened by attempts to measure such a state to recover the encoded information. This problem was first systematically studied in~\cite{pete03} but some explicit comments were already made in~\cite{arvmuk} and probably even before. A partial remedy was found in~\cite{pete03} too where the authors first  realized the importance of a specific helicity measurement and then introduced the concept of a three-dimensional helicity density matrix to deal with the problems mentioned above. Regarding the helicity measurement, it is a simple projective measurement onto the plane perpendicular to the direction of propagation of the wave packet and due to its simplicity it will be used in this work as well. We will refer to this kind of measurement as the Peres-Terno (measurement) scheme. We will, however, abandon the concept of three-dimensional helicity density matrix. The additional longitudinal degree of freedom introduced in~\cite{pete03} for the sake of having at least definite transformation properties under ordinary rotations will be of no use here.  Moreover, the introduction of a longitudinal degree of freedom sounds like there is something forbidden going on but it is not the case. This expression, however, simply refers to the third degree of freedom of a specific POVM measurement. One could in principle have POVMs with arbitrarily many outcomes to be able to reconstruct the wave packet to an arbitrary precision. The whole problem of the helicity density matrix has been further clarified and generalized to other (non-relativistic) situations in~\cite{AW1} and~\cite{linter}. Problems of density matrix construction were considered in other scenarios, too~\cite{fry}. For the purpose of this article we stick to the relativistic context.

There exists a class of wave packets where the definition of the helicity density matrices causes no troubles. These are linear polarization wave packets in which case, loosely speaking, all the momentum three-vectors point in the same direction (the direction of propagation). This class of wave packets has very simple transformation properties and allows for introduction of the invariant helicity density matrix~\cite{czachor,caban,alsing}. That is, tracing over momentum is allowed and by invariance we mean that the coefficients of the resulting density matrix are independent of the reference frame. States transforming covariantly under a general action of the Lorentz group are a subject of interest. The classification of two-photon states has been provided in Ref.~\cite{caban_cov}. Let us emphasize that contrary to our paper only states with sharp momenta were considered so the physical picture considerably differs.

The aim of this paper is twofold. In the first part we construct localized wave packets where the information is encoded in the helicity degrees of freedom (Sec.~\ref{sec:invariantpacket}). The effect of an arbitrary Lorentz transformation is studied and the conditions for invariant transformation of these wave packets are found. The helicity density matrix derived from these states is well defined and is invariant under the general action of the Lorentz group. Finally, the effect of the Peres-Terno helicity measurement on the receiver's side is investigated concluding the perfect discriminability of two initially orthogonal states in contrast to the original solution. Appendix~A contains some further technical details.

The second part is solely aimed at non-experts on the problem of the photonic wave packets construction and their relativistic transformations. It consists of Appendices B and C where we first briefly review Wigner's phenomenal contribution to the understanding of the role which is played by the Poincar\'e group in quantum field theory in Minkowski spacetime~\cite{wi39}. 
Appendix~C contains a discussion of a rather complicated issue of measurement for generic photonic wave packets and the solution of this problem found in~\cite{pete03}.

\section{Relativistically invariant photonic wave packets}\label{sec:invariantpacket}

\subsection{Photonic wave packets}

We adopt the following notation throughout the article. The components of a general momentum four-vector (briefly four-momentum) $p$ is written  as $p^\mu$. We recall that indices from the Greek alphabet are reserved for labeling all four components of the four-momentum. The four-momenta $p^\mu$ living in the four-momentum space can be of null ($p_\mu p^\mu=0$), timelike ($p_\mu p^\mu>0$) or spacelike ($p_\mu p^\mu<0$) character (omitting the zero vector). For the purpose of this article we are interested in the first class. Hence we set $c=1$ and normalize energy such that $p^0=1$ so we get $p_ip^i=1$ ($i\in\{1,2,3\}=\{x,y,z\}$) for all null four-momenta from the future light cone. For further details about the Poincar\'e group and how it acts the reader can take a look at Appendix~B. Unless explicitly mentioned the manifold we will consider in this paper is the four-momentum space and not the usual Minkowski spacetime (one temporal and three spatial coordinates). The difference is again sketched in Appendix~B. For a certainly better exposition, one might also consult extensive literature on the subject of the representations of the Poincar\'e group, for example,~in~\cite{Wein,Tung,halpern,sexl}.

A generic single-particle pure photonic wave packet can be written in the form
\begin{equation}\label{eq:genwavepacket}
    \ket{\psi}=\int{\rm d}\mu(p)f(p)\big(\a_{p}\ket{p,+}+\b_{p}\ket{p,-}\big),
\end{equation}
where $\a_ p,\b_ p\in\C,|\a_{ p}|^2+|\b_{ p}|^2=1,\int{\rm d}\mu( p)|f( p)|^2=1,{\rm d}\mu( p)={1/(2(2\pi)^3)}{\rm d}^3p$ is the Lorentz invariant measure and $f( p)$ is an envelope (scalar) function therefore staying invariant under the Lorentz transformation~\cite{Tung}. The helicity eigenstates span a two-dimensional complex Hilbert space and these degrees of freedom are perfectly suited for transmission of quantum or classical information in free space.  We keep the zeroth component of all four-momentum vectors equal to one (or equivalently that $p_ip^i=1$ holds) so we restrict ourselves to monochromatic wave packets. We did not make our considerations less general by restricting to monochromatic wave packets. The measurement introduced later will  simply be color-blind, that is, it will act trivially on the color subspace $\CC_ p$ for every direction $p$.

Furthermore, we naturally assume that $\a,\b$ are identical for all momentum directions so we may drop the subscript. The preparation of the wave packet is assumed to be under our control. Moreover, it would be a strange way of encoding quantum information into the wave packet if the quantum state differed for different momentum four-vectors already at the preparation stage. However, the momentum directions (the envelope function $f(p)$) continue to be distributed arbitrarily. We thus rewrite Eq.~(\ref{eq:genwavepacket}) as
\begin{equation}\label{eq:genwavepacket_rewritten}
    \ket{\psi}=\a\int{\rm d}\mu(p)f(p)\ket{p,+}+\b\int{\rm d}\mu(p)f(p)\ket{p,-}.
\end{equation}

Let us see what happens if we first just rotate the wave packet
\begin{equation}\label{eq:wavpackrotated}
    U\big(R(\vt,\vp)\big)\ket{\psi}
    =\a\int{\rm d}\mu(q)f(q)\exp{[i\t_{p,q}]}\ket{q,+}+\b\int{\rm d}\mu(q)f(q)\exp{[-i\t_{p,q}]}\ket{q,-},
\end{equation}
where $R(\vt,\vp)$ determines the spatial rotation of the coordinate system (or of the wave packet if we adopt the active point of view) in the direction given by $\vt$ and $\vp$ and from the scalar character of the envelope function it is understood that $f(q)\equiv f(p)$. The transformation in Eq.~(\ref{eq:wavpackrotated}) is indeed nothing else than a $U(1)$ transformation and the parameter is the Wigner angle from Eq.~(\ref{eq:unitarylorentz}). Also, from Eq.~(\ref{eq:unitarylorentz}) follows that the Wigner angle is explicitly dependent on $p$ and so every ket under the above integral acquires a different phase. As pointed out by the authors of~\cite{pete03} when we trace over the non-interesting degrees of freedom (the momentum) the resulting two-dimensional helicity density matrix does not transform according to a representation of the $SU(2)$ group. It does not actually have transformation properties at all. We can get an insight into why tracing over the momentum is ill-defined. Take just two different directions $p_1, p_2$ in Eq.~(\ref{eq:genwavepacket_rewritten}). The rotation of the wave packet  given by $R(\vt,\vp)$ results in an unequal phase change and thus the coefficients originally equal to each other will generally become different. So by a suitable rotation two vectors with the same components can be made orthogonal and viceversa. Note, however, that even before the rotation the operation of tracing over the momentum is not valid despite of bringing up a reasonable density matrix.

\subsection{Assembling of invariant photonic wave packets}

We will take a slightly different path towards a reasonable wave packet density matrix. While we stick to the helicity measurement studied by Peres and Terno~\cite{pete03} we change the building block from which a wave packet is constructed. The choice of the Peres-Terno measurement scheme is well motivated. This helicity measurement is probably the simplest one possible. We recall the whole issue for readers unfamiliar with the problem in Appendix~C. The fixed projection measurement breaks the rotational symmetry. This is the reason why we cannot hope in a construction of a covariant density matrix.

Nevertheless, we will be able to construct an invariant density matrix. That is a matrix whose components will stay constant independently on the receivers reference frame. How do we achieve that? Contrary to~\cite{pete03} we use helicity-momentum entangled two-mode states as the construction blocks. We let the helicity-momentum entangled two-mode states propagate  in the $z$ direction in a fixed coordinate system. We first find the conditions for the entangled states under which they do not acquire any phase in a different (spatially rotated) reference frame. Having these conditions imposed, we investigate the effect of the helicity measurement on the entangled states and also see the effect of a boost in the $z$ direction. Based on the results, we shall construct wave packets which can perfectly transfer quantum or classical information even in the relativistic regime.

\subsection{The effect of spatial rotations}



For the purpose of the helicity wave packet construction introduced later we will now show that for the Bell states (Eq.~(\ref{eq:Bellstates})) with specifically correlated momenta the effect of a spatial rotation is harmless.

Following Appendix~B, let us introduce the helicity eigenstates of the standard momentum $k$
\begin{equation}\label{eq:projpol}
    \ket{+}_ k=\begin{pmatrix}
                               1 \\
                               0 \\
                                \,0\, \\
                             \end{pmatrix},
    \quad
    \ket{-}_ k=\begin{pmatrix}
                               0 \\
                               1 \\
                               \,0\, \\
                             \end{pmatrix}.
\end{equation}
The axial component can be disregard as a result of the application of the Coulomb gauge which explicitly sets the component to zero. The helicity eigenstates can be transformed into the linear polarization eigenstates corresponding to the standard momentum $k$ by means of the following unitary operation
\begin{equation}\label{eq:Pmatrix}
    S={1\over\sqrt{2}}\begin{pmatrix}
         1 & -i & 0 \\
         1 & i & 0 \\
         0 & 0 & \sqrt{2} \\
       \end{pmatrix}.
\end{equation}
We will need this operation to transform the rotation operator from the linear polarization basis $R(\vt,\vp)$~\cite{pete03} to the helicity basis $\tilde R(\vt,\vp)$. Therefore, to determine the helicity vector for an arbitrary direction parametrized by two angles $\vt,\vp$ we transform the helicity eigenstates as $\ket{\pm}_ p=\tilde R(\vt,\vp)\ket{\pm}_ k=S\,R(\vt,\vp)\,S^{-1}\ket{\pm}_ k$. Then
\begin{equation}\label{eq:helrotation}
    \tilde R(\vt,\vp)={1\over2}\left(\begin{matrix}
                           (\cos\vt+1)\exp{(-i\vp)} & (\cos\vt-1)\exp{(-i\vp)} & \sqrt{2}\sin{\vt}\exp{(-i\vp)}\\
                            (\cos\vt-1)\exp{(i\vp)} & (\cos\vt+1)\exp{(i\vp)} & \sqrt{2}\sin{\vt}\exp{(i\vp)}\\
                            -\sqrt{2}\sin\vt  & -\sqrt{2}\sin\vt    & 2\cos\vt
                         \end{matrix}\right)
\end{equation}
and so throughout the article we work strictly in the helicity basis~\footnote{Note that we get the helicity eigenstates in the linear polarization basis by the action of $S^{-1}\ket{\pm}_ k$. Before proceeding, let us mention that there exist many conventions for the matrix $S$. One can, for example, often find
$$
    S^\p={1\over\sqrt{2}}\begin{pmatrix}
         -1 & 1 & 0 \\
         -i & -i & 0 \\
         0 & 0 & \sqrt{2} \\
       \end{pmatrix}.
$$
The rotation matrix then looks different as well as other objects we will meet later but the physical consequences are equivalent. By working with Eq.~(\ref{eq:Pmatrix}) we employed the convention of~\cite{Wein}.}.

We write down the usual four Bell states
\begin{subequations}\label{eq:Bellstates}
\begin{align}\label{eq:BellPhipm}
    \ket{\Phi(p_1,p_2)}_\pm&={1\over\sqrt{2}}\left(\ket{{p_1},+}\ket{{p_2},+}\pm\ket{{p_1},-}\ket{{p_2},-}\right)\\
    \label{eq:BellPsipm}
    \ket{\Psi(p_1,p_2)}_\pm&={1\over\sqrt{2}}\left(\ket{{p_1},+}\ket{{p_2},-}\pm\ket{{p_1},-}\ket{{p_2},+}\right),
\end{align}
\end{subequations}
where $\ket{{p_i},\pm}=a^\dg(p_{i},\pm)\ket{vac}$ using the notation of Appendix~B (from now on, the Latin subscripts label corresponding four-vectors and not their components). No need to repeat that the states above are not wave packets yet.
Using Eqs.~(\ref{eq:Pmatrix}) and~(\ref{eq:helrotation}) we rewrite states~(\ref{eq:BellPhipm}) and (\ref{eq:BellPsipm}). The helicity triads then read
\begin{align}\label{eq:mom-heleigsrepresented1}
    \ket{+}_ p&\equiv\ket{{p},+}={1\over{2}}\begin{pmatrix}
                  (\cos\vt+1)\exp{(-i\vp)} \\
                   (\cos\vt-1)\exp{(i\vp)} \\
                   -\sqrt{2}\sin\vt \\
                \end{pmatrix}\\
    \label{eq:mom-heleigsrepresented2}
      \ket{-}_ p&\equiv\ket{{p},-}={1\over{2}}\begin{pmatrix}
                   (\cos\vt-1)\exp{(-i\vp)}  \\
                  (\cos\vt+1)\exp{(i\vp)}  \\
                  -\sqrt{2}\sin\vt \\
                \end{pmatrix}.
\end{align}

An arbitrary rotation can be decomposed into  rotations around the $z$ axis followed by another rotation around the $y$ axis. Following the prescription sketched in Remark~\ref{rem:wignerangle} (or looking into~\cite{ging} or eventually~\cite{caban} where the Wigner angle was calculated in the representation of the cover of the Lorentz group -- the group $SL(2,\C)$) we find that the rotation $R_z(\la)$ around the $z$ axis through angle $\la$ induces the Wigner phase angle $\t_{{z}, p}\equiv\t_ p=0$ for all $p$ as long as $p\not=k$. Exactly the measure zero possibility $p=k$ is excluded from the definition of $\ket{\Phi(p_1,p_2)}$ because that would imply $p_1=p_2\equiv k$ and $\ket{\Phi(p_1,p_2)}$ to be converted  into $\ket{\phi}\propto\big([a^\dg(k,+)]^2+[a^\dg(k,-)]^2\big)\ket{vac}$ due to indistinguishability of such photons. Hence the Bell states transform as
\begin{align}\label{eq:ZrotatedPhipm}
    U(R_z(\la)):\ket{\Phi(p_1,p_2)}_\pm&\to\ket{\Phi(R_zp_1,R_zp_2)}_\pm\\
    \label{eq:ZrotatedPsipm}
    U(R_z(\la)):\ket{\Psi(p_1,p_2)}_\pm&\to\ket{\Psi(R_zp_1,R_zp_2)}_\pm.
\end{align}
The rotation $R_y(\varpi)$ around the $y$  axis results in a more complicated formula for the Wigner angle
\begin{equation}\label{eq:wignerangle}
    \t_{ p}=\arctan{\left[
    {\sin\varpi\sin\vp\over\sin\varpi\cos\vt\cos\vp+\cos\varpi\sin\vt}
    \right]}.
\end{equation}
To tackle the momentum dependence we will assume the momenta $ p_1, p_2$ to be correlated  such that the net effect of the rotation will be the overall phase equal to zero for all $ p_1$. That implies for Eq.~(\ref{eq:BellPhipm}) to find such $p_2$ that the corresponding Wigner phases are conjugated (because the helicity is equal for each product). Having $ p_i=(1,\sin\vt_i\cos\vp_i,\sin\vt_i\sin\vp_i,\cos\vt_i)$ this condition implies that we have to find solutions of the equation
\begin{equation}\label{eq:fixedpointsofPhipm}
    {\sin\varpi\sin\vp_1\over\sin\varpi\cos\vt_1\cos\vp_1+\cos\varpi\sin\vt_1}
    =-{\sin\varpi\sin(\vp_1+x)\over\sin\varpi\cos(\vt_1+y)\cos(\vp_1+x)+\cos\varpi\sin(\vt_1+y)},
\end{equation}
where $\vp_1+x=\vp_2,\vt_1+y=\vt_2$. This equation deserves an explanation. To find a solution means to find $x,y$ independent on $\varpi$. In other words, we want to locate all fixed points. Note that these points are not completely fixed since they vary with $\vp_1,\vt_1$.

Similarly, for the second couple Eq.~(\ref{eq:BellPsipm}) we have to find $ p_1, p_2$ such that the Wigner phases are equal (the helicity is opposite for each product)
\begin{equation}\label{eq:fixedpointsofPsipm}
    {\sin\varpi\sin\vp_1\over\sin\varpi\cos\vt_1\cos\vp_1+\cos\varpi\sin\vt_1}
    ={\sin\varpi\sin(\vp_1+x)\over\sin\varpi\cos(\vt_1+y)\cos(\vp_1+x)+\cos\varpi\sin(\vt_1+y)}.
\end{equation}

\begin{figure}[h]
  \begin{center}
  \hfill
  \begin{minipage}[t]{.5\textwidth}
    \begin{center}
      \epsfig{file=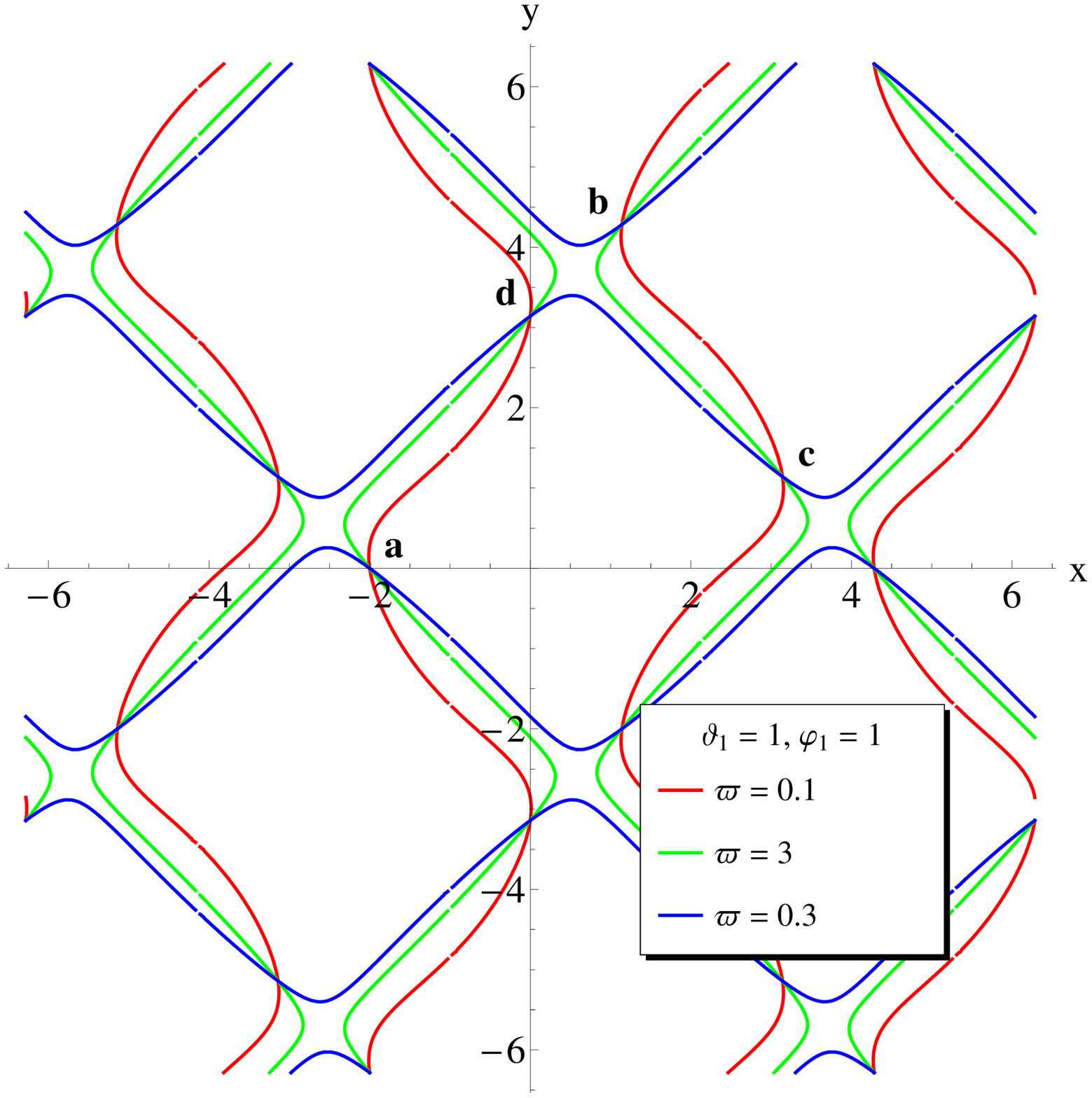, scale=0.4}
      \caption{Four fixed points corresponding to Eqs.~(\ref{eq:Phipma}),~(\ref{eq:Phipmb}), (\ref{eq:Phipmc}) and~(\ref{eq:Phipmd}) are indicated by points $a,b,c$ and $d$. Curves corresponding to a different rotation angle $\varpi$ about the $y$ axis will always meet in these points. The fixed points satisfy Eq.~(\ref{eq:fixedpointsofPhipm}). One can notice several gaps on every curve. These are singularities of Eq.~(\ref{eq:fixedpointsofPhipm}).}
      \label{fig:fixedpointsPhi}
    \end{center}
  \end{minipage}
  \hfill
  \begin{minipage}[t]{.47\textwidth}
    \begin{center}
      \epsfig{file=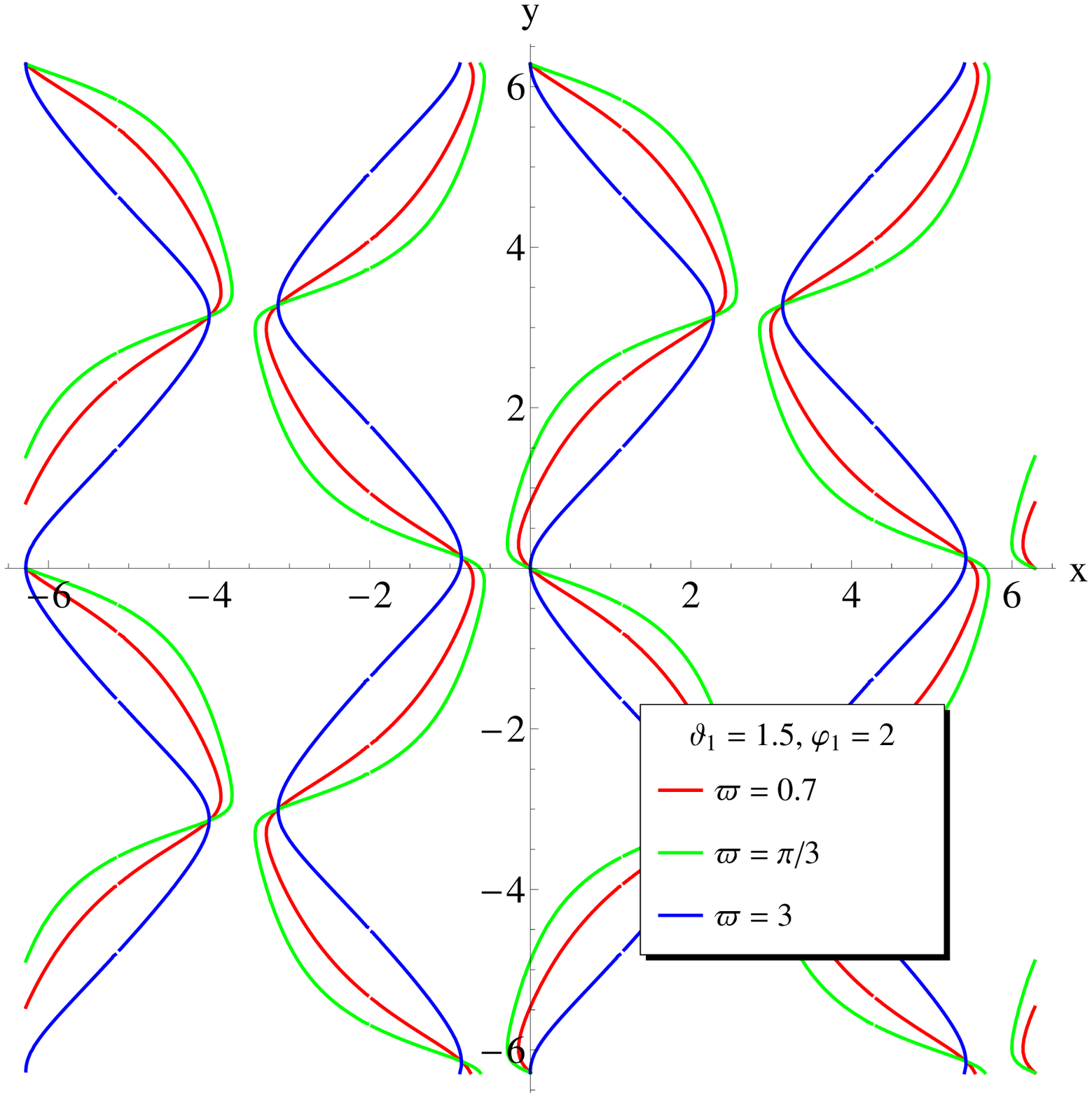, scale=0.4}
      \caption{Here we demonstrate how the position of fixed points changes for a different momentum direction. The fixed points satisfy Eq.~(\ref{eq:fixedpointsofPsipm}) so in contrast to Fig.~\ref{fig:fixedpointsPhi} there is a global (trivial) fixed point in the centre which is constant for all $\vt_1,\vp_1$. One can notice several gaps on every curve. These are singularities of Eq.~(\ref{eq:fixedpointsofPsipm}).}
      \label{fig:fixedpointsPsi}
    \end{center}
  \end{minipage}
  \hfill
    \end{center}
    \label{fig:fixedpoints1}
\end{figure}
\begin{lem}
    There exist four fixed points for Eqs.~(\ref{eq:fixedpointsofPhipm}) and~(\ref{eq:fixedpointsofPsipm}).
\end{lem}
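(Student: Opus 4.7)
The plan is to convert the $\varpi$-independence requirement into a $\varpi$-free algebraic system in $(\vt_2,\vp_2)=(\vt_1+y,\vp_1+x)$ and count its common solutions. First I would clear the denominators on both sides of Eq.~(\ref{eq:fixedpointsofPhipm}) (and separately Eq.~(\ref{eq:fixedpointsofPsipm})), cancel the overall factor $\sin\varpi$, and regroup the resulting identity as $P(\vt_2,\vp_2)\sin\varpi+Q(\vt_2,\vp_2)\cos\varpi=0$. Because the fixed points must satisfy the identity for every value of $\varpi$, the two trigonometric modes decouple and one obtains the pair
\begin{align*}
\sin\vp_1\cos\vt_2\cos\vp_2 &= \mp\sin\vp_2\cos\vt_1\cos\vp_1,\\
\sin\vp_1\sin\vt_2 &= \mp\sin\vp_2\sin\vt_1,
\end{align*}
with the upper and lower signs corresponding to the $\Phi_\pm$ and $\Psi_\pm$ cases respectively.

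Next I would reduce the pair to a single polynomial equation in one scalar. Setting $\sin\vt_2=k\sin\vt_1$ and $\sin\vp_2=\mp k\sin\vp_1$ satisfies the second equation identically, and the first collapses to $\cos\vt_2\cos\vp_2=k\cos\vt_1\cos\vp_1$. Eliminating $\cos\vt_2$ and $\cos\vp_2$ by the Pythagorean identities and squaring yields the factored relation
\[
(1-k^2)\bigl(1-k^2\sin^2\vt_1\sin^2\vp_1\bigr)=0,
\]
so the admissible values are $k^2\in\{1,\,(\sin\vt_1\sin\vp_1)^{-2}\}$.

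For each root of $k^2$ the residual ambiguities $\vt_2\leftrightarrow\pi-\vt_2$ and $\vp_2\leftrightarrow\pi\pm\vp_2$ produce four $(\vt_2,\vp_2)$ candidates; the signed constraint $\cos\vt_2\cos\vp_2=k\cos\vt_1\cos\vp_1$ then selects exactly two of them. Tallying over the two roots of $k^2$ gives the four fixed points claimed by the lemma, whose closed-form expressions would be recorded as Eqs.~(\ref{eq:Phipma})--(\ref{eq:Phipmd}) and their $\Psi$-analogues. The main obstacle I foresee is algebraic bookkeeping rather than anything conceptual: cancelling $\sin\varpi$ and squaring the cosine-product identity can admit spurious roots, so each of the four surviving candidates must be verified against the original unsquared system. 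The delicate point is the second root $k^2=(\sin\vt_1\sin\vp_1)^{-2}$, whose associated $(\vt_2,\vp_2)$ values have to be tracked carefully in relation to the singularities of Eq.~(\ref{eq:wignerangle})---precisely the ``gaps'' visible on the $\varpi$-curves in Figs.~\ref{fig:fixedpointsPhi} and~\ref{fig:fixedpointsPsi}.
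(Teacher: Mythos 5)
Your first step coincides exactly with the paper's: after clearing denominators and cancelling $\sin\varpi$, the requirement that the identity hold for \emph{every} $\varpi$ forces the coefficients of the two trigonometric modes to vanish separately, and the resulting $\varpi$-free pair is precisely the pair of ratio conditions the paper writes down. The divergence --- and the gap --- is in how you count the solutions of that pair. Your elimination is algebraically correct up to the factorization $(1-k^2)\bigl(1-k^2\sin^2\vt_1\sin^2\vp_1\bigr)=0$, but the tally ``two fixed points per root of $k^2$'' fails. The second root forces $|\sin\vt_2|=|k\sin\vt_1|=1/|\sin\vp_1|\geq1$ and $|\sin\vp_2|=1/|\sin\vt_1|\geq1$, so for generic $\vt_1,\vp_1$ it admits no real angles whatsoever: it is a spurious branch created by squaring the signed constraint $\cos\vt_2\cos\vp_2=k\cos\vt_1\cos\vp_1$, not a feature tied to the singularities of Eq.~(\ref{eq:wignerangle}) as you suggest (it only becomes realizable when $|\sin\vt_1|=|\sin\vp_1|=1$, where it degenerates into $k^2=1$ anyway). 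All four fixed points of the lemma live on the branch $k^2=1$, and the sign of $k$ matters because the cosine constraint is sign-sensitive: for $k=+1$ it selects $(\vt_2,\vp_2)=(\vt_1,-\vp_1)$ and $(\pi-\vt_1,\pi+\vp_1)$ out of the four candidates, and for $k=-1$ it selects $(-\vt_1,\pi-\vp_1)$ and $(\vt_1+\pi,\vp_1)$ --- precisely Eqs.~(\ref{eq:Phipma})--(\ref{eq:Phipmd}). So the correct count is $4+0$ over the two roots of $k^2$, not $2+2$; carried out literally, your accounting would either surface only two real fixed points or import complex ``solutions'' from the spurious root.

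Once that bookkeeping is repaired the two arguments are essentially equivalent. The paper enumerates the solutions of the two ratio equations directly and verifies them case by case, whereas you eliminate variables through the auxiliary parameter $k$ at the cost of a squaring step whose extraneous roots must then be discarded; both routes transfer the $\Phi$ solutions to the $\Psi$ case by the same observation (the paper via the shift $y\mapsto y+\pi$, you via the overall sign flip in the $\varpi$-free system, which is the identical statement).
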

\begin{proof}
    Eq.~(\ref{eq:fixedpointsofPhipm}) can be cast into the form
    \begin{equation}
        \tan{\varpi}\left[\sin{\vp_1}\cos{(\vt_1+y)}\cos{(\vp_1+x)}+\sin{(\vp_1+x)\cos{\vt_1}\cos{\vp_1}}\right]=-\sin{(\vp_1+x)}\sin{\vt_1}-\sin{\vp_1}\sin{(\vt_1+y)}.
    \end{equation}
    To have solutions independent on $\varpi$ and satisfying the above equation the only possibility is when both sides are equal to zero. That implies
    \begin{align}
        {\cos{(\vt_1+y)}\cos{(\vp_1+x)}\over\sin{(\vp_1+x)}}&=-{\cos{\vt_1}\cos{\vp_1}\over\sin{\vp_1}}\\
        {\sin{(\vt_1+y)}\over\sin{(\vp_1+x)}}&=-{\sin{\vt_1}\over\sin{\vp_1}}.
    \end{align}
    It is easy to enumerate all possibilities for which, for example, the first equation is satisfied and then to check if it also holds  for the second one. We end up with four different solutions (modulo ${2\pi}$)
    \begin{subequations}
\begin{align}
     &\vp_2=-\vp_1,\vt_2=\vt_1\label{eq:Phipma}\\
     &\vp_2=\pi-\vp_1,\vt_2=-\vt_1\label{eq:Phipmb}\\
     &\vp_2=\pi+\vp_1,\vt_2=\pi-\vt_1\label{eq:Phipmc}\\
     &\vp_2=\vp_1,\vt_2=\vt_1+\pi\label{eq:Phipmd}.
\end{align}
\end{subequations}
Having the solutions of Eq.~(\ref{eq:fixedpointsofPhipm}) we immediately  get the solutions for Eq.~(\ref{eq:fixedpointsofPsipm}). The later equation is obtained from the former one by a mere shift $x\mapsto x,y\mapsto y+\pi$. Hence
\begin{subequations}
\begin{align}
     &\vp_2=-\vp_1,\vt_2=\vt_1+\pi  \label{eq:Psipma}\\
     &\vp_2=\pi-\vp_1,\vt_2=\pi-\vt_1   \label{eq:Psipmb}\\
     &\vp_2=\pi+\vp_1,\vt_2=-\vt_1\label{eq:Psipmc}\\
     &\vp_2=\vp_1,\vt_2=\vt_1\label{eq:Psipmd}.
\end{align}
\end{subequations}
In Figs.~\ref{fig:fixedpointsPhi} and~\ref{fig:fixedpointsPsi} we see the position of four fixed points for some $\vt_1,\vp_1$. If we had in Fig.~\ref{fig:fixedpointsPsi} the same parameters $\vt_1,\vp_1$ as in Fig.~\ref{fig:fixedpointsPhi} the picture would indeed be only shifted along the $y$ axis. We can check that all the solutions satisfy the corresponding Eqs.~(\ref{eq:fixedpointsofPhipm}) and~(\ref{eq:fixedpointsofPsipm}).
\end{proof}

The first solution Eq.~(\ref{eq:Phipma}) is the most interesting one since $ p_2$ dwells in the same hemisphere as $ p_1$. Commenting on the rest of the solutions, Eq.~(\ref{eq:Phipmb}) is equivalent to Eq.~(\ref{eq:Phipma}) and the remaining two solutions are a bit awkward because $ p_2$ points in the opposite direction. We might, however, say that
\begin{equation}\label{eq:YrotatedPhipm}
    U(R_y(\varpi)):\ket{\Phi(p_1,p_2)}_\pm\to\ket{\Phi(R_yp_1,R_yp_2)}_\pm.
\end{equation}

From the physical point of view there is a noteworthy asymmetry in the solutions  for Eq.~(\ref{eq:fixedpointsofPsipm}). The first two fixed points again correspond to a wave packet with the second momentum pointing in the opposite direction and the last two solutions are just an identity operation (${ p_2}={ p_1}$). So there are two solutions where
\begin{equation}\label{eq:YrotatedPsipm}
    U(R_y(\varpi)):\ket{\Psi(p_1,p_2)}_\pm\to\ket{\Psi(R_yp_1,R_yp_2)}_\pm
\end{equation}
holds but, for the practical purposes, the opposite momentum direction is an obvious obstacle.

The conclusion is that as long as we keep the momenta in the Bell states correlated as described above it is not necessary to introduce three-dimensional helicity density matrices since there is no phase change due to the spatial rotation at all (cf. Eqs.~(\ref{eq:ZrotatedPhipm}),~(\ref{eq:ZrotatedPsipm}),~(\ref{eq:YrotatedPhipm}) and ~(\ref{eq:YrotatedPsipm})).

The way of getting rid of the Wigner phase for spatial rotations is  reminiscent of the method presented in~\cite{relinv}. The authors constructed wave packets by entangling two spatially distinguishable wave packets carrying the same momentum and opposite helicities. Every spatial rotation induces an opposite Wigner phase for each wave packet so the resulting phase is zero~\footnote{Note that this construction is not working for arbitrary wave packets. It works only for wave packets where the rotation induces the same Wigner phase for all momenta from which the wave packet is `assembled'.}. As we already stressed, however, the Bell states we work with so far are not wave packets so the construction presented here is qualitatively different.

\subsection{The effect of Peres-Terno measurement}\label{sec:measurement}

The projection present in the Peres-Terno measurement process (Eq.~(\ref{eq:projpol})) acts by `cutting off' the $z$ component of Eqs.~(\ref{eq:mom-heleigsrepresented1}) and~(\ref{eq:mom-heleigsrepresented2}). Henceforth, after the proper normalization we get
\begin{align}\label{eq:mom-heleigscutoff1}
    \Pi_ k: \ket{{p},+}\to\ketf{p,+}&={1\over N}\begin{pmatrix}
                  (\cos\vt+1)\exp{(-i\vp)} \\
                   (\cos\vt-1)\exp{(i\vp)} \\
                \end{pmatrix}\\
     \label{eq:mom-heleigscutoff2}
     \Pi_ k: \ket{{p},-}\to\ketf{p,-}&={1\over N}\begin{pmatrix}
                  (\cos\vt-1)\exp{(-i\vp)} \\
                   (\cos\vt+1)\exp{(i\vp)} \\
                \end{pmatrix},
\end{align}
where $N={\sqrt{2+2\cos^2{\vt}}}$. Note that $\brk{p,+}{p,-}=0$ for all $p$ but $\bkf{p,+}{p,-}\not=0$ as expected. The `floored' kets indicate the action of the projection $\Pi_ k$ on a momentum-helicity state.

Interesting things start to happen when we ask how the projection $\Pi_ k$ acts on the Bell states. First of all, these states are two-mode states so the projection is actually $\Pi^{(1)}_ k\otimes\Pi^{(2)}_ k$ where the superscripts indicate which mode is being measured. The action of the projection can be found in Appendix~A. In general, these states are subnormalized and non-orthogonal and thus resembling a general situation studied in~\cite{pete03}. But, intriguingly, when plugging in some of the solutions for the momentum correlations (namely Eqs.~(\ref{eq:Phipma}),~(\ref{eq:Phipmc}),~(\ref{eq:Phipmd}),~(\ref{eq:Psipma}) and~(\ref{eq:Psipmb})) found from the study of fixed points for arbitrary rotations about the $y$ axis  we find that  (i) the resulting floored Bell states are orthogonal for all $\vt_1,\vp_1$ and (ii) one of the states is always (up to a normalization) invariant. This has some interesting consequences particularly for the solution Eq.~(\ref{eq:Phipma}). Recall that this is the only non-trivial solution where the resulting state `points' in the same direction as the propagation direction. From now on we will stick just to this solution since it will be later relevant for our discussion of localized wave packets  (the analysis for the rest of the non-trivial solutions gives similar results but their use as localized wave packets is obviously none).

When the previous claim written in detail for $\vp_2=-\vp_1,\vt_2=\vt_1$ we get from Eqs.~(\ref{eq:APPPhiprojplus}) and~(\ref{eq:APPPhiprojminus})
\begin{multline}\label{eq:Phiplusproj}
     \Pi^{(1)}_ k\otimes\Pi^{(2)}_ k:\ket{\Phi(p_1,p_2)}_+\to\ketf{\Phi(p_1,p_2)}_+
     =\begin{pmatrix}
      1 \\
      \exp{(-2i\vp_1)}{\cos^2\vt_1-1\over\cos^2\vt_1+1} \\
      \exp{(2i\vp_1)}{\cos^2\vt_1-1\over\cos^2\vt_1+1} \\
       \,1\, \\
     \end{pmatrix}\\\xrightarrow{\rm normalization}a_1\ketf{\Phi_+}+a_2\ketf{\Psi_+}+ia_3\ketf{\Psi_-}
\end{multline}
\begin{equation}\label{eq:Phiminusproj}
     \Pi^{(1)}_ k\otimes\Pi^{(2)}_ k:\ket{\Phi(p_1,p_2)}_-\to\ketf{\Phi(p_1,p_2)}_-
     ={1\over1+\cos^2{\vt_1}}
     \begin{pmatrix}
     2\cos\vt_1 \\
     0 \\
     0 \\
      -2\cos\vt_1 \\
     \end{pmatrix}\xrightarrow{\rm normalization}\ketf{\Phi_-},
\end{equation}
where $a_i\in\R$ and $\sum_ia_i^2=1$. We denoted $\ketf{\Phi_+}=1/\sqrt{2}(\ketf{0_1}\ketf{0_2}+\ketf{1_1}\ketf{1_2})$ and similarly for the rest of the states. Note that both output states are orthogonal for all $\vt_1$ and $\vp_1$. Also, the resulting states are subnormalized even though states~(\ref{eq:mom-heleigscutoff1}) and~(\ref{eq:mom-heleigscutoff2}) are normalized. This is again the effect of cutting off (alias projection) the `longitudinal' part of Eqs.~(\ref{eq:mom-heleigsrepresented1}) and~(\ref{eq:mom-heleigsrepresented2}). The normalization factor for Eq.~(\ref{eq:Phiplusproj}) and Eq.~(\ref{eq:Phiminusproj}) is ${1+\cos^2\vt_1\over2\sqrt{1+\cos^4\vt_1}}$ and ${1+\cos^2\vt_1\over\sqrt{8}{\cos\vt_1}}$, respectively. Recall that the measurement indeed acts as a trace-decreasing non-completely positive map~\cite{nonCP}. We also want to draw attention to the similarity between the behavior of the investigated Bell states under the irrep of the product $SU(2)\times SU(2)$ and the projection $\Pi^{(1)}_ k\otimes\Pi^{(2)}_ k$. In the fist case the singlet is preserved and the triplet states transform among themselves. This is a starting point to considerations about reference frames and decoherence-free subspaces~\cite{refframesReview}. In our case one of the triplets is invariant meanwhile the second triplet mingles with the singlet and the remaining triplet state.

An important comment also related to decoherence-free subspaces has to made. As we have seen, the Bell states remain orthogonal after the projection
$$
\brk{\Phi(p_1,p_2)_+}{\Phi(p_1,p_2)_-}=0
\xrightarrow{\ \Pi^{(1)}_ k\otimes\Pi^{(2)}_ k}
\bkf{\Phi(p_1,p_2)_+}{\Phi(p_1,p_2)_-}=0.
$$
However, the action of the projection is not $SU(2)\times SU(2)$ covariant. In other words, there is a preferred basis spanned by states from Eq.~(\ref{eq:Bellstates}). One faces a slightly analogous situation encountered in the study of decoherence-free subspaces~\cite{refframe2003}. However, the presented results still hold for any orthogonal basis $\ket{{p_i},+},\ket{{p_i},-}$ forming the actual entangled states in Eq.~(\ref{eq:Bellstates}). The only difference is given by the fact that our underlying Hilbert space is infinite-dimensional.

\subsection{Boost along the $z$ axis followed by measurement}
The found result has several serious consequences. First, if we prepare two orthogonal pure qubits  having the basis formed by  the Bell pair $\ket{\Phi(p_1,p_2)}_\pm$, they will be perfectly distinguishable by the measurement. This was not the case for the original Peres-Terno scheme. But more important, if we boost along the $z$ axis and then perform the same projective measurement we see that state~(\ref{eq:Phiplusproj}) remains invariant in the sense that the boost only modifies $\vt_1(=\vt_2)$ and keeps $\vp_1(=-\vp_2)$ intact. Of course, by the change of $\vt_1,\vt_2$ the four-momenta $ p_1, p_2$ change too as Eq.~(\ref{eq:particle}) dictates. They symmetrically close in or open up depending on the observer's velocity direction. In a real experiment this would be registered as the color change (this is the Doppler effect which we formally suppress by the renormalization of the new four-vectors).

Hence the effect of a boost in the $z$ direction $B_z(\eta)$ followed by the projective measurement $\Pi^{(1)}_ k\otimes\Pi^{(2)}_ k$ is
\begin{align}\label{eq:Phiplusboosted}
     \Pi^{(1)}_ k\otimes\Pi^{(2)}_ k:\ket{\Phi(B_zp_1,B_zp_2)}_+&\to\ketf{\Phi(B_zp_1,B_zp_2)}_+\\
    \label{eq:Phiminusboosted}
     \Pi^{(1)}_ k\otimes\Pi^{(2)}_ k:\ket{\Phi(B_zp_1,B_zp_2)}_-&\to\ketf{\Phi(B_zp_1,B_zp_2)}_-.
\end{align}

Let us conclude this section with two comments. (i) As in the previous case all output states are orthogonal for all boosts and every $ p_1$. We have to stress, however, that the above transformations are not unitary. We recall that the normalization of Eqs.~(\ref{eq:mom-heleigscutoff1})-(\ref{eq:Phiminusproj}) is explicitly $\vt-$dependent. It would be a clear sign of inconsistency if we had a finite-dimensional unitary representation for boost transformations -- the boost generators are the noncompact part of the Lorentz algebra. Similar inconveniences were encountered in~\cite{pete03} but in our case it does not cause serious troubles. (ii) For a state propagating in the $z$ direction the most general Lorentz transformation can be expressed as $R_z(\la)R_y(\varpi)B_z(\eta)$. Hence, we have shown that the Bell states propagating in the $z$ direction and with the appropriately correlated momenta are unaffected by an arbitrary rotation (and therefore a general boost) if measured according to the Peres-Terno scheme.

\subsection{Relativistically invariant photonic wave packets}\label{sec:relinvwavpack}

The real touchstone for quality will be the behavior of wave packets composed from the previously studied entangled states under the Lorentz transformation.  As indicated, we will demonstrate the construction and properties of invariant wave packets with $\ket{\Phi(p_1,p_2)}_\pm$  serving as the logical basis for construction of a (pure) wave packet qubit.  We will not have anything specific to say about the actual feasibility of producing of such states in a laboratory. Also, the dispersion effects of other than relativistic origins are not considered here.

Let us follow the prescription and notation for the single-mode case~Eq.~(\ref{eq:genwavepacket})
\begin{equation}\label{eq:genwavpackBell}
    \ket{\Omega_{12}}=\int{\rm d}\mu( p_1){\rm d}\mu( p_2)f( p_1, p_2)\big(\a_{ p_{1,2}}\ket{\Phi(p_1,p_2)}_++\b_{ p_{1,2}}\ket{\Phi(p_1,p_2)}_-\big).
\end{equation}
Similarly to the situation of ordinary wave packets we assume $\a$ and $\b$ to be momentum independent (see the discussion below  Eq.~(\ref{eq:genwavepacket_rewritten})).

Taking into account Eqs.~(\ref{eq:ZrotatedPhipm}) and~(\ref{eq:YrotatedPhipm}) and how the boost acts we easily find
\begin{align}\label{eq:genwavpackBell_trabsformed}
    \ket{\mho_{12}}&=U(R_z(\la)R_y(\varpi)B_z(\eta))\ket{\Omega_{12}}\nn\\
    &=\a\int{\rm d}\mu( q_1){\rm d}\mu(q_2)f(q_1,q_2)\ket{\Phi(q_1,q_2)}_+
    +\b\int{\rm d}\mu( q_1){\rm d}\mu(q_2)f(q_1,q_2)\ket{\Phi(q_1,q_2)}_-,
\end{align}
where $q_i=R_z(\la)R_y(\varpi)B_z(\eta) p_i$. As discussed earlier in Eq.~(\ref{eq:wavpackrotated}) the measure and the envelope function are Lorentz invariant ($f(q_1,q_2)\equiv f(p_1,p_2)$). Hence, we call the transformed wave packet invariant because the state amplitudes remain unaltered by an arbitrary Lorentz transformation. In other words, otherwise disastrous decoherence between helicity and momentum degrees of freedom is kept under control by a careful choice of the logical basis.

As a final step,  the measurement described by $\Pi^{(1)}_ k\otimes\Pi^{(2)}_ k$  is realized, due to Eqs.~(\ref{eq:Phiplusproj}) and~(\ref{eq:Phiminusproj}), by the projections
\begin{equation}\label{eq:wavpackprojs}
    \Gamma_1=\kbf{\Phi_-}{\Phi_-},\quad\Gamma_2=\openone-\Gamma_1.
\end{equation}
To be more precise, we have seen that the map related to the measurement operator $\Pi^{(1)}_ k\otimes\Pi^{(2)}_ k$ is trace-decreasing. Consequently, this measurement is a conditional (or post-selected) measurement. The overall probability of measurement is lower than one  and depends on the details of the wave packet, namely on the choice of its envelope function. When the measurement event occurs the observer can always perfectly distinguish both orthogonal basis states since the two subspaces from Eq.~(\ref{eq:wavpackprojs}) are orthogonal and independent on the observer's reference frame.  Note that we tacitly but naturally assume the existence of a global coordinate system.


\section{Conclusions}

There exists a class of photonic states we call the realistic wave packets. It is a weighted and normalized superposition of common eigenvectors of the momentum and helicity operators. The reason we call them realistic is twofold. From the mathematical point of view, if the corresponding envelope function is square integrable (but otherwise almost arbitrary) function, such an object lives in a Hilbert space. From the physical point of view, this object can  in principle be prepared, transmitted, manipulated in a localized manner and finally detected in a laboratory. The problem appears when we consider wave packets in a relativistic regime. The general Lorentz transformation related to the change of a reference frame acts individually on every momentum and induces momentum-helicity entanglement. The situation is even worse when we try to recover the information by an act of measurement. The simplest (and probably the only feasible) way is the Peres-Terno scheme of the helicity projection onto the plane perpendicular to the direction of propagation.

Surprisingly, in this paper we have shown that under these unfavorable conditions we are still able to prepare  localized wave packets for which the information can be, at least in principle, transmitted and recovered in every reference frame with perfect fidelity. There are two key ingredients. First, the logical qubit basis are two-mode maximally entangled helicity states. Second, the momentum degrees of freedom of these states are correlated in a precise manner. This results in canceling the explicit momentum dependence for an arbitrary rotation and boost. Intriguingly, the same condition implies that during the measurement according to the above scheme the orthogonal logical basis states forming a qubit are projected into two orthogonal subspaces enabling us to perfectly distinguish them.

It has been known from the previous analysis that the map governing the evolution and measurement of generic wave packets in the relativistic setting is trace-decreasing and non-completely positive. It is a consequence of the non-covariant helicity measurement scheme and this leads to all aforementioned effects and troubles. Here we have encountered precisely  the same behavior but contrary to the previous work these effects are now harmless. Consequently, using the investigated encoding the problem of non-covariant transformation of the von Neumann entropy of helicity density matrices disappears.

Finally, let us stress that the choice of the Peres-Terno measurement (arguably spoiling all nice transformation properties of the momentum-helicity eigenstates) is directly motivated by its relative feasibility in a laboratory. Should more sophisticated helicity measurements be available we might obtain more elegant results.

\section*{Acknowledgements}

It is a pleasure to thank Roc\'io J\'auregui and Prakash Panangaden for comments on the draft. The work was supported  by a  grant from the Office of Naval Research  (N000140811249).

\section*{Appendix A}
\renewcommand{\theequation}{A-\arabic{equation}}
\setcounter{equation}{0}

 The projection corresponding to the Peres-Terno measurement scheme acts as
\begin{equation}\label{eq:projectoraction}
    \Pi^{(1)}_k\otimes\Pi^{(2)}_k:\ket{{p_i},\s_i}\ket{{p_j},\s_j}\to\ketf{{p_i},\s_i}\ketf{{p_j},\s_j},
\end{equation}
where the `floored' quantities indicate the projection action. Denoting
$$
{1\over N}={1\over\sqrt{(1+\cos^2{\vt_1})(1+\cos^2{\vt_2})}}\\
$$
we get
\begin{multline}\label{eq:APPPhiprojplus}
    \Pi^{(1)}_k\otimes\Pi^{(2)}_k:\ket{\Phi(p_1,p_2)}_+\to\ketf{\Phi(p_1,p_2)}_+
    ={1\over N}\begin{pmatrix}
    (\cos\vt_1\cos\vt_2+1)\exp{(-i(\vp_1+\vp_2))}\\
    (\cos\vt_1\cos\vt_2-1)\exp{(-i(\vp_1-\vp_2))}\\
    (\cos\vt_1\cos\vt_2-1)\exp{(i(\vp_1-\vp_2))}\\
    (\cos\vt_1\cos\vt_2+1)\exp{(i(\vp_1+\vp_2))}\\
    \end{pmatrix},
\end{multline}
\begin{multline}\label{eq:APPPhiprojminus}
    \Pi^{(1)}_k\otimes\Pi^{(2)}_k:\ket{\Phi(p_1,p_2)}_-\to\ketf{\Phi(p_1,p_2)}_-
    ={1\over N}\begin{pmatrix}
    (\cos\vt_1+\cos\vt_2)\exp{(-i(\vp_1+\vp_2))}\\
    -(\cos\vt_1-\cos\vt_2)\exp{(i(\vp_2-\vp_1))}\\
    (\cos\vt_1-\cos\vt_2)\exp{(-i(\vp_2-\vp_1))}\\
   -(\cos\vt_1+\cos\vt_2)\exp{(i(\vp_1+\vp_2))}\\
    \end{pmatrix},
\end{multline}
\begin{multline}\label{eq:APPPsiprojplus}
    \Pi^{(1)}_k\otimes\Pi^{(2)}_k:\ket{\Psi(p_1,p_2)}_+\to\ketf{\Psi(p_1,p_2)}_+
    ={1\over N}\begin{pmatrix}
    (\cos\vt_1\cos\vt_2-1)\exp{(-i(\vp_1+\vp_2))}\\
    (\cos\vt_1\cos\vt_2+1)\exp{(i(\vp_2-\vp_1))}\\
    (\cos\vt_1\cos\vt_2+1)\exp{(-i(\vp_2-\vp_1))}\\
    (\cos\vt_1\cos\vt_2-1)\exp{(i(\vp_1+\vp_2))}\\
    \end{pmatrix}
\end{multline}
and
\begin{multline}\label{eq:APPPsiprojminus}
    \Pi^{(1)}_k\otimes\Pi^{(2)}_k:\ket{\Psi(p_1,p_2)}_-\to\ketf{\Psi(p_1,p_2)}_-
    ={1\over N}\begin{pmatrix}
   - (\cos\vt_1-\cos\vt_2)\exp{(-i(\vp_1+\vp_2))}\\
    (\cos\vt_1+\cos\vt_2)\exp{(-i(\vp_1-\vp_2))}\\
   -(\cos\vt_1+\cos\vt_2)\exp{(i(\vp_1-\vp_2))}\\
    (\cos\vt_1-\cos\vt_2)\exp{(i(\vp_1+\vp_2))}\\
    \end{pmatrix}.
\end{multline}

\section*{Appendix B - Review the unitary representation of the Poincar\'e group}
\renewcommand{\theequation}{B-\arabic{equation}}
\setcounter{equation}{0}

The purpose of the following two Appendices is to recall the problem of the polarization encoding for realistic photonic wave packets. Readers not familiar with this problem full of intricate details might find this short review useful.

The Poincar\'e algebra has two Casimir operators $C_1=P_\chi P^\chi$ and $C_2=W_\chi W^\chi$ where $W^\chi=-1/2\,\e^{\chi\rho\a\omega}P_\rho J_{\a\omega}$ is the Pauli-Lubanski vector, $\{J_{23},J_{31},J_{12}\}\equiv\{J_1,J_2,J_3\}$ are the total angular momentum generators and $\{J_{01},J_{02},J_{03}\}\equiv\{K_1,K_2,K_3\}$ are the boost generators. We recall that $\e^{\chi\rho\a\omega}$ is the Levi-Civita tensor and $\e^{0123}=-1$. The corresponding Poincar\'e group induces translations $x^\mu\to \La^\mu_\nu x^\nu+a^\mu$ (in the coordinate representation) where $\La(\zeta)\in SO(3,1)$ is a Lorentz transformation $\La(\zeta)=\exp{(-i/2\zeta^{\s\vs}J_{\s\vs})}$ and $T(a)=\exp{(-ia^\mu P_\mu)}$. It is well known that the Poincar\'e group is a semidirect product of the proper orthochronous Lorentz group and a group of translations $SO(3,1)\ltimes R^4$.

The group acts on set $K$  which is a linear vector space equipped with the Minkowski metric with the signature $\{-+++\}$. Hence the set $K$ is a manifold rich in structure. Note that in the case of Minkowski spacetime  the action of the Poincar\'e group is uninteresting since it acts transitively. It is not so in the four-momentum space. The Lorentz group acts in a similar manner for both spaces but the action of the translation operator differs. One can see it easily (in an admittedly handwaving way) realizing that the Fourier transform of a function $f:x_i\mapsto x_i-a_i$ for some constant $a$ is a simple phase transformation in the three-momentum space. 

The analysis of the four-momentum space $K$ shows that there are six orbits. The whole classification from the physical point of view can be found, for instance, in~\cite{Wein,Tung}.  We are interested in the orbit which physically corresponds to null vectors with a positive zero component of the momentum four-vector. They correspond (after quantization) to the scarce but important class of free massless particles containing photons and possibly gravitons. By the very definition of orbit no Lorentz transformation can change the character of a null vector. In other words, photons always travel at the speed of light no matter the reference frame a potential observer resides in. We are free to generate the `null orbit'  from any null momentum four-vector but it is customary and advantageous to choose the simplest possible one -- the standard four-momentum $k^\mu=(1,0,0,1)$. Note that other choices are completely equivalent as discussed in detail in~\cite{halpern}. From the physical perspective, the standard direction corresponds to (eventually quantized) plane electromagnetic waves (photons) traveling at the normalized speed of light along the $z$ axis. Nevertheless, the word `physical' should be used carefully. It is not, in principle, possible to generate such a field (classical or quantum). The reason is that the sharp momentum value implies that the classical wave is `infinitely' spread in position and one would need an infinite amount of energy to prepare such a field. In the quantum case the corresponding state does not even occupy a Hilbert space of square integrable functions and is completely delocalized in Minkowski spacetime. Classical or quantum fields created in a laboratory are in reality wave packets with a finite spread in momentum and position.

Using another property of the algebra $[P^\mu,P^\nu]=0$ and $C_1=C_2=0$ which holds in the massless case only we denote a `single-particle' state $\ketp$ as an eigenstate of $P^\mu$ where $p^\mu$ are eigenvalues of $P^\mu$. Other possible degrees of freedom related to $W^\mu$  rather than to $C_2$ (note that $[W^\mu,P^\nu]=0$) are gathered in $\s$
\begin{equation}\label{eq:particle}
    P^\mu\ketp=p^\mu\ketp.
\end{equation}

To be able to use quantum mechanics in the relativistic context as we habitually do in ordinary quantum mechanics we need to introduce a Hilbert space and a unitary representation in it which respects the composition law for the Poincar\'e group
\begin{equation}\label{eq:unitaryrepre}
    U(\{\tilde\La,b\})U(\{\La,a\})=\exp{(i\Upsilon)}U(\{\La\bar\La,\tilde\La a + b\}).
\end{equation}
The appearance of the phase is due to the projective nature of quantum states (rays instead of vectors). For simply-connected groups it is possible  to get rid of the phase factor ($\Upsilon=0$). This is not the case of the Lorentz (or Poincar\'e) group and it will have some impact at a later stage.

The Poincar\'e group is non-compact and therefore there is no faithful finite-dimensional unitary representation. This might have been a huge obstacle for studying the evolution of relativistic quantum states but Wigner realized how to circumvent this problem. He found that the unitary action of the Lorentz group is governed by the irreps of its stabilizer subgroups~\cite{Tung}. Loosely speaking, the little group `induces' its action on the Lorentz group. The method is called the method of induced representations and was later studied and generalized by~\cite{mackey} and others. The Wigner method prescribes that the unitary action of a Lorentz transformation $\La_\mu^\nu p^\mu=p'^\nu$ reads
\begin{equation}\label{eq:wigneraction}
    U(\La)\ketp=\sum_{\s'}D_{\s,\s'}(S_ k)\ket{p',\s'},
\end{equation}
where $D_{\s,\s'}(S_ k)$ is the irrep of the stabilizer group $S_ k$ keeping the standard null direction $ k$ intact. The stabilizer group
can be written
\begin{equation}\label{eq:stabilizer}
    S_ k=L^{-1}(\La p)\La L( p),
\end{equation}
where $L( p): k\to p$ takes the standard four-vector to any other null vector (i.e., to a vector on the same orbit). It turns out that the stabilizer group for the standard null orbit is the Euclidean group $ISO(2)$ (also known as $E(2)$) transforming a two-dimensional (Euclidean) plane into itself. It is a three parametric group generated by translations in two directions and a rotation around the axis perpendicular to the plane. The components of the Pauli-Lubanski vector are the generators of the corresponding Lie algebra. The group is noncompact so it might seem that we did not improve our situation but we actually did. To make the long story short~\cite{Wein} it appears that the noncompact part of the stabilizer group is not physically relevant to the evolution of the photon under the Lorentz group so the only $SO(2)$ subgroup of $ISO(2)$ remains. The corresponding algebra generator is $W^0=W^3\equiv J_3$ and we find
\begin{equation}\label{eq:heli}
    J_3\ket{k,\s}=\s\ket{k,\s}.
\end{equation}
We can see from Eq.~(\ref{eq:heli}) that
\begin{equation}\label{eq:pheli}
    U(R(\vt,\vp))J_3U^{-1}(R(\vt,\vp))\ketp=\s\ketp
\end{equation}
so we define helicity as the projection of the angular momentum along the direction of motion and $\s$ is clearly a relativistic invariant.

The $SO(2)$ group element is as usual recovered by exponentiation $R_z(\t)=\exp{(-i\t J_3)}$ so for the purpose of Eq.~(\ref{eq:wigneraction}) we get
\begin{equation}\label{eq:Dmatrixforrot}
    D_{\s,\s'}(R_z(\t))=\exp{[-i\s\t]}\delta_{\s\s'}
\end{equation}
and so for an arbitrary Lorentz group $\La: p\to q$
\begin{equation}\label{eq:unitarylorentz}
    U(\La)\ketp=\exp{\left[-i\s\t_{ p, q}\right]}\ketq.
\end{equation}
\begin{rem}\label{rem:wignerangle}
    We will call $\t_{p,q}$ the Wigner angle and it can be explicitly calculated  for a given Lorentz transformation $\La: p\to q$ from Eq.~(\ref{eq:stabilizer}) by setting $R_z(\t_{p,q})=S_ k$. Notice that the phase factor $\exp{\left[-i\s\t_{p,q}\right]}\in U(1)$ is indeed a unitary representation of $R_z(\t_{ p, q})\in SO(2)$. The two groups are known to be isomorphic.
\end{rem}
\begin{rem}
The parameter $\s$ can take just integer or half-integer values as the result of the inability to set the phase in Eq.~(\ref{eq:unitaryrepre}) to zero~\cite{Wein}. Moreover,  the momentum projection in two opposite directions $ p$ and $- p$ are related by the parity operator so for photons as  massless vector bosons we get $\s=\pm1$. We see that it is necessary to enlarge the Poincar\'e group by including the parity transformation~\cite{arvmuk}. Clearly, there is no helicity equal to zero since there is no rest frame for a photon.
\end{rem}
\begin{rem}
    A state $\ketp$ coincides with the action of creation operators  $\crop\ket{vac}$ to the vacuum state from the usual quantization procedure of a free massless vector field yielding a single particle state of a momentum $ p$ and helicity $\s=\pm1$ (alias circular polarization). It is straightforward to generate multi-particle states by a repeated application of the creation operators for different modes satisfying $[\cropp,\crop]=\d( p'- p)\d_{\s'\s}$ and thus to create the familiar Fock space as a direct sum of the completely symmetric Hilbert space of $n$ photons $\F=\bigoplus_{n=0}^\infty\H_n^{\rm sym}$.

    Also notice that one of the consequences of the Wigner procedure  is a clear indication that there are just two `spin' degrees of freedom and they are always perpendicular to the direction of motion and to each other. We know this fact without mentioning the Coulomb gauge whatsoever. The Coulomb gauge can achieve the same goal but one has to pay the price of not having a manifestly covariant gauge condition. That is, the gauge condition must be imposed for every reference frame separately. On the other hand, the Coulomb gauge is important and useful in quantum field theory from a broader point of view.
\end{rem}

\section*{Appendix C - Measurement troubles and wave packet helicity density matrices}
\renewcommand{\theequation}{C-\arabic{equation}}
\setcounter{equation}{0}

The authors of~\cite{pete03} proposed a very simple measurement which corresponds to what actually might happen in a laboratory when measuring the helicity (polarization) degrees of freedom. Assume that the direction of propagation of the wave packet is the standard direction $k$. It is customary to place a polarization analyzer perpendicular to the direction of motion. Our helicity eigenstates $\ket{p,\pm}$ can be represented as two orthogonal complex four-vectors $\ve^\mu_{p,\pm}$. For a free field the Coulomb gauge $p_i\ve^i_{ p,\pm}=0$ implies the axial component of the helicity vector to be zero $\ve^0_{p,\pm}=0$ (holds for all $p$ because just rotations are considered now). Quantum electrodynamics indicates what happens next. We summon up the well-known relation for the helicity eigenvectors~\cite{Wein}
\begin{equation}\label{eq:projpol_general}
    \Pi_{p}^{ij}=\sum_{\s=\pm}\ve^i_{p,\s}\bar\ve^j_{p,\s}=\d^{ij}-{p^ip^j\over p_kp^k},
\end{equation}
where the overbar indicates complex conjugation. Following the notation of Eqs.~(\ref{eq:mom-heleigsrepresented1}) and~(\ref{eq:mom-heleigsrepresented2}) and Appendix~B the helicity measurement for $p=k$ corresponds to the trivial projection
\begin{equation}\label{eq:projpol}
    \Pi_{k}=\sum_{\s=\pm}\kbr{\s}{\s}_k=\begin{pmatrix}
                                                            1 & 0 & 0 \\
                                                            0 & 1 & 0 \\
                                                            0 & 0 & 0\\
                                                          \end{pmatrix},
    \qquad
    \ket{+}_k=\begin{pmatrix}
                               1 \\
                               0 \\
                                \,0\, \\
                             \end{pmatrix}
    \quad
    \ket{-}_k=\begin{pmatrix}
                               0 \\
                               1 \\
                               \,0\, \\
                             \end{pmatrix},
\end{equation}
where special symbols have been assigned for the helicity eigenstates because we work with them in the main body. The problem is that the projection is of rank two. When a single photon arrives in a direction $ p\not= k$ it  has a longitudinal component and hence it gets cut during the course of measurement. Then, to have a consistent definition of the helicity density matrix in a rotated reference frame, one has to reconstruct the longitudinal part because that is the place where some parts of the wave function `got lost'. This basically  means to measure the helicity in the planes perpendicular to the $x$ and $y$ axes in our fixed coordinate system to be able to build a three-dimensional helicity density matrix. Finally, we get a two-dimensional effective density matrix $\vr_{\rm{eff}}$ by cutting a $2\times2$ block from the whole three-dimensional matrix. The resulting state is positive-semidefinite and subnormalized. It is the consequence of the lemma  following the definition.
\begin{defi}
    For a matrix of dimension $n$ the leading principal submatrices  are all the upper left submatrices of dimension $k\leq n$.
\end{defi}
\begin{lem}\label{lem:submatrix}
    Let $\vr$ be a density matrix written in terms of (not necessarily orthogonal) pure states $\ket{\phi_i}$ such that $\vr=\sum_{i=1}^np_i\kbr{\phi_i}{\phi_i}$. Then every leading principal  submatrix of $\vr$ is positive semidefinite.
\end{lem}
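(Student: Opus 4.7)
The plan is to reduce the statement to the standard linear-algebra fact that any principal submatrix of a positive semidefinite matrix is itself positive semidefinite. The hypothesis about the decomposition $\vr=\sum_i p_i\kbr{\phi_i}{\phi_i}$ with $p_i\geq 0$ enters only to establish that $\vr$ is PSD; the non-orthogonality of the $\ket{\phi_i}$ plays no role beyond that.

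First I would verify that $\vr\geq 0$. For any vector $\ket{\psi}$ in the ambient Hilbert space, one has $\bra{\psi}\vr\ket{\psi}=\sum_i p_i|\brk{\psi}{\phi_i}|^2\geq 0$, since each $p_i$ is non-negative and each $|\brk{\psi}{\phi_i}|^2$ is non-negative. Hence $\vr$ is a positive semidefinite operator, regardless of whether the $\ket{\phi_i}$ are orthogonal.

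Next I would promote this to every leading principal submatrix. Fix $k\leq n$ and let $M$ denote the leading principal $k\times k$ submatrix of $\vr$, i.e.\ the matrix with entries $M_{ab}=\vr_{ab}$ for $1\leq a,b\leq k$ in the fixed basis in which $\vr$ is represented. Given any $y\in\C^k$, define the padded vector $x\in\C^n$ by $x_a=y_a$ for $a\leq k$ and $x_a=0$ for $a>k$. A direct expansion gives
\begin{equation}
y^\dg M y=\sum_{a,b=1}^k\bar y_a\,\vr_{ab}\,y_b=\sum_{a,b=1}^n\bar x_a\,\vr_{ab}\,x_b=x^\dg \vr\, x\geq 0,
\end{equation}
where the final inequality is the PSD property of $\vr$ established above. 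Since $y$ was arbitrary, $M\geq 0$.

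There is essentially no obstacle; the only thing worth flagging is a bookkeeping point. The statement refers to \emph{leading} principal submatrices, but the padding argument works verbatim for any principal submatrix (indexed by an arbitrary subset $\{a_1,\dots,a_k\}\subset\{1,\dots,n\}$), simply by placing the components of $y$ in the corresponding slots of $x$ and zeros elsewhere. In particular the restriction to the leading block is a matter of convention in the definition preceding the lemma, not a constraint imposed by the proof. The subnormalization and positive-semidefiniteness of the effective helicity matrix $\vr_{\mathrm{eff}}$ alluded to in the surrounding text then follow by taking $k=2$.
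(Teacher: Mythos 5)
Your proof is correct, but it takes a different (and in fact more standard) route than the paper. The paper never passes through the statement ``$\vr\geq 0$ and every principal submatrix of a PSD matrix is PSD''; instead it truncates each pure state $\ket{\phi_i}$ to its first $k$ components, writes the leading $k\times k$ block of $\vr$ as the convex combination $\sum_i p_i\kbf{\phi_i}{\phi_i}$ of the resulting rank-one (hence manifestly PSD) matrices, and concludes. The two arguments are dual to one another -- the paper projects the states, you project the test vector by zero-padding -- and both are one-liners. What the paper's version buys is an explicit decomposition of the submatrix, which is exactly the form reused in the corollary for $\vr_{\rm eff}$ and which immediately yields the subnormalization claim, since the trace becomes $\sum_i p_i\bkf{\phi_i}{\phi_i}\leq 1$; your version buys generality (the decomposition enters only to certify $\vr\geq 0$, and, as you correctly note, the argument applies to arbitrary, not just leading, principal submatrices) but does not by itself produce the decomposition of the truncated state that the paper goes on to use. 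Both proofs are complete; there is no gap in yours.
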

\begin{proof}
    Taking the $k-$dimensional leading principal submatrix of $\ket{\phi_i}$ means to replace the last $n-k$ entries of  $\ket{\phi_i}$ by zeros. We will label the resulting vectors as $\ketf{\phi_i}$. The corresponding matrices $\kbf{\phi_i}{\phi_i}$ are still positive semidefinite and the positivity stays preserved by taking their convex combination $\tilde\vr=\sum_{i=1}^kp_i\kbf{\phi_i}{\phi_i}$. The resulting density matrix is subnormalized.
\end{proof}
\begin{col}
    The projection $\Pi_{ k}$  does exactly the job of cutting out the principal two-dimensional submatrix from the $3\times3$ helicity matrix (recall that the axial part is set to zero for all $ p$ by the gauge condition). Therefore
    \begin{equation}\label{eq:rhoeff}
            \vr_{\rm{eff}}={1\over N}\int{\rm d}\mu( p)|f( p)|^2\Pi_ k\left[(\a_ p\ket{+}_ p+\b_ p\ket{-}_ p)
            (\bar\a_ p\bra{+}_ p+\bar\b_ p\bra{-}_ p)\right]\Pi_ k,
    \end{equation}
    where $N$ is a normalization constant.
\end{col}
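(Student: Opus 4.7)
My plan is to establish the two halves of the Corollary in turn. First, I verify the matrix-cutting claim. Using the explicit $3\times3$ form of $\Pi_k$ from Eq.~(\ref{eq:projpol}), namely the diagonal matrix with entries $(1,1,0)$, a direct computation shows that for any $3\times3$ matrix $M$ the sandwich $\Pi_k M \Pi_k$ retains exactly the entries $M_{ij}$ for $i,j\in\{1,2\}$ and annihilates every other entry. In other words, $\Pi_k(\cdot)\Pi_k$ is precisely the operation of extracting the leading $2\times 2$ principal submatrix and re-embedding it into the $3\times3$ ambient space. By Lemma~\ref{lem:submatrix} this operation preserves positive semidefiniteness, which validates the first sentence of the Corollary.

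Second, to arrive at the explicit formula for $\vr_{\rm eff}$, I would start from the generic wave packet $\ket{\psi}$ in Eq.~(\ref{eq:genwavepacket}), form the full pure-state density matrix $\kbr{\psi}{\psi}$, and perform the (formal) trace over the momentum degree of freedom. Because $\ket{p,\pm}$ are realized concretely as $3$-vectors (cf.~Eqs.~(\ref{eq:mom-heleigsrepresented1})--(\ref{eq:mom-heleigsrepresented2})) while the envelope $f(p)$ is a Lorentz scalar, the momentum-traced object is the $3\times3$ matrix
\begin{equation*}
\int{\rm d}\mu(p)\,|f(p)|^2\,\big(\a_p\ket{+}_p+\b_p\ket{-}_p\big)\big(\bar\a_p\bra{+}_p+\bar\b_p\bra{-}_p\big).
\end{equation*}
Sandwiching by $\Pi_k$ on both sides and absorbing the residual trace loss into the scalar $N$ then reproduces exactly Eq.~(\ref{eq:rhoeff}). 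Positivity of $\vr_{\rm eff}$ follows termwise from Lemma~\ref{lem:submatrix}, since each integrand is already a pure-state projector in the $3\times3$ ambient space.

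The real obstacle---rather more conceptual than technical---is that the ``trace over momentum'' step is not strictly well defined, since the helicity eigenstates $\ket{p,\pm}$ attached to different $p$ live in genuinely different Hilbert spaces (this is precisely the pathology emphasized in the Introduction). The Peres-Terno philosophy circumvents this \emph{a posteriori} by insisting on the projection $\Pi_k$: once $\Pi_k$ has killed the axial component in a common reference frame, the resulting object is a bona fide positive-semidefinite matrix in the fixed laboratory basis, and the normalization constant $N$ absorbs the trace decrease caused by the non-unitary projection. The Corollary should therefore be read as a prescription for defining the effective two-dimensional helicity density matrix, with Lemma~\ref{lem:submatrix} supplying the structural guarantee that the prescription yields a legitimate (subnormalized) state.
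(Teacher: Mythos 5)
Your argument is correct and follows essentially the same route as the paper: the Corollary is an immediate application of Lemma~\ref{lem:submatrix}, once one observes that conjugation by the rank-two projector $\Pi_k$ extracts the leading $2\times2$ principal submatrix of each pure-state projector $(\a_p\ket{+}_p+\b_p\ket{-}_p)(\bar\a_p\bra{+}_p+\bar\b_p\bra{-}_p)$, so that the integral formula for $\vr_{\rm eff}$ is just the Peres--Terno prescription with $N$ absorbing the trace loss. Your closing caveat about the formal momentum trace matches the paper's own framing, which treats Eq.~(\ref{eq:rhoeff}) as a definition of the effective helicity density matrix whose legitimacy (positive semidefiniteness, subnormalization) is guaranteed by the Lemma rather than as something derived from a well-defined partial trace.
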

Eq.~(\ref{eq:rhoeff}) says nothing other than if we measure the helicity perpendicular to the propagation direction (the $z$ axis in this case and so $\Pi_ k$ is of the simple form Eq.~(\ref{eq:projpol})) a statistical mixture of all $x,y$ helicity components for all $ p$ is generated. This is exactly the  result from~\cite{pete03}. Of course, we may decide to measure along a general axis $ g\not= k$. In this case Eq.~(\ref{eq:rhoeff}) is still valid if we take $\Pi_ k\to\Pi_ g$ and render the helicity vectors in the new basis.
\begin{rem}
    One can adopt the approach of~\cite{AW1} where  the above Peres-Terno measurement scheme was reformulated in terms of measurement of the Stokes parameters of the incoming wave packet. We will not go into detail but we just point out the Lie algebraic aspect of this approach. In a similar manner as above the authors basically constructed three different effective helicity matrices which  correspond to the above mentioned measurement of the helicity in three orthogonal spatial directions. From these they constructed a three-dimensional helicity density matrix which in general can be written in terms of the $su(3)$ Lie algebra generators $\la_i^{su(3)}$~\cite{biedlouck}
    \begin{equation}\label{eq:su3lie}
        \vr={\openone\over3}+\sum_{i=1}^8s_i^{su(3)}\la_i^{su(3)},
    \end{equation}
    where $s_i^{su(3)}$ are the actual measured Stokes parameters. We now realize that the $su(3)$ Lie algebra is composed of three mutually dependent $su(2)$ Lie algebras and they precisely form the three effective helicity density matrices calculated in~\cite{AW2} (and one of them is Eq.~(\ref{eq:rhoeff})).
\end{rem}
\begin{rem}
    We should stress that in the above process one indeed gets a longitudinal part of the wave function but there is no need to call it a non-physical situation~\cite{pete03}. We simply get a helicity component in the plane not parallel to the plane where the helicity is measured. In principle, this situation cannot be avoided unless one works with single-particle states or very narrow photonic wave packets.

    Also, the three-dimensional helicity density matrix concept has only little to do with the bosonic nature of the photon. It is just a coincidence that we measure along three perpendicular directions corresponding to our usual spatial dimensions. If we had had a more sophisticated measurement  device we could have measured the helicity in each and every plane perpendicular to every $ p$ direction and map it to some other (non-photonic) multi-level quantum system. We would avoid projecting it down to our usual three-dimensional space where  helicity measurement devices usually operate. Pushing it to the limit, this way we would have been able to recover the whole density matrix because  $\brk{p,\pm}{p',\pm}=\d( p- p')$ holds in the infinite-dimensional helicity-momentum Hilbert space.
\end{rem}

So far we talked just how a wave packet transforms under an ordinary rotation. The remedy was to introduce a three-dimensional helicity density matrix or to eventually create a narrow wave packet where the `longitudinal' component is negligible~\cite{pete03}. The effect of a boost on a wave packet is more disastrous. Assume again that the wave packet Eq.~(\ref{eq:genwavepacket}) is propagating along the $z$ axis with the standard momentum $ k$ and an observer is also moving  along the $z$ axis with a constant velocity $v=v_z\in(-1,1)$. Having $\eta=\atanh{v_z}$ the boost $B_z: p\to q$ where $B_z(\eta)=\exp{(-i\eta K_3)}$ does not induce any phase change~\cite{ging} so the boost transforms a general single-particle state as $U(B_z(\eta)):\ketp\to\ket{B_zp,\s}$. The effect of the boost on a four-momentum vector is twofold: (i) The magnitude of the three-vector and the zeroth component are no longer equal to one. This is the Doppler effect we suppress by renormalization as discussed below Eq.~(\ref{eq:genwavepacket}). (ii) The gauge condition is violated since it is not a Lorentz covariant condition and must be imposed for every $ q$ as is usually done in quantum electrodynamics~\cite{Wein}. Thus $B_z:{\ve}_{ p,\s}\to {\ve}_{ q,\s}$ is followed by
\begin{equation}\label{eq:gaugecond}
    {\rm Coulomb\ gauge:\ } {\ve}_{ q,\s}\to {\ve}_{ q,\s}-g_{ q,\s} q,
\end{equation}
where $g_{ q,\s}={\ve^0_{ q,\s}/q^0}$ ($q^0=1$ following from (i)). The gauge condition Eq.~(\ref{eq:gaugecond}) assures that the new helicity four-vectors stay orthogonal in a plane orthogonal to $ q$. This is a final blow to our effort to have nice transformation properties of the helicity density matrix between two different frames of reference.
\begin{rem}
    The things get more technically complicated if we allow for a boost in a general direction. Then, a Wigner phase appears what can be easily seen if we realize that such a boost can be decomposed in a boost in the $z$ direction followed by a rotation in the required direction.
\end{rem}

\end{document}